\definecolor{winered}{rgb}{0.6,0,0}
\definecolor{lessblue}{rgb}{0,0,0.7}
\newcommand{\myitem}[3]{\item[#2]\def\@currentlabel{#3}\label{#1}}
\def\@tocline#1#2#3#4#5#6#7{
\begingroup
  \par
    \parindent\z@ \leftskip#3 \relax \advance\leftskip\@tempdima\relax
                  \rightskip\@pnumwidth plus 4em \parfillskip-\@pnumwidth
    \ifcase #1 
       \vskip 0.6em \hskip 0em 
       \or
       \or \hskip 0em 
       \or \hskip 1em 
    \fi%
    %
    #6
    %
    \nobreak\relax{\leavevmode\leaders\hbox{\,.}\hfill}
    \hbox to\@pnumwidth {\@tocpagenum{#7}}
  \par
\endgroup
}
 \def\l@section{\@tocline{0}{0pt}{0pc}{}{}}
\renewcommand{\tocsection}[3]{%
  \indentlabel{\@ifnotempty{#2}{ 
    \ignorespaces\bfseries{#2. #3}}}
  \indentlabel{\@ifempty{#2}{\ignorespaces\bfseries{#3}}{}} 
    \vspace{1.5pt}}
\renewcommand{\tocsubsection}[3]{%
  \indentlabel{\@ifnotempty{#2}{
    \ignorespaces#2. #3}}
  \indentlabel{\@ifempty{#2}{\ignorespaces #3}{}}
    \vspace{1.5pt}}
\renewcommand{\tocsubsubsection}[3]{%
  \indentlabel{\@ifnotempty{#2}{
    \ignorespaces#2. #3}}
  \indentlabel{\@ifempty{#2}{\ignorespaces #3}{}}
    \vspace{1.5pt}}
\def\@nomenstarted{0}
\newlength{\@nomenoldtabcolsep}
\newcommand{\nomenstart}
  {%
    \def\@nomenstarted{1}%
    \setlength{\@nomenoldtabcolsep}{\tabcolsep}%
    \setlength{\tabcolsep}{3.5pt}%
    \begin{longtable}{p{0.11\textwidth} p{0.86\textwidth}}
  }
\newcommand{\nomenitem}[2]{%
    \ifcase\@nomenstarted%
      \or 
      \or \\ 
    \fi%
    #1\,{\leavevmode\leaders\hbox{\,.}\hfill} & #2%
    \def\@nomenstarted{2}%
  }%
\newcommand{\nomenend}
  {\\%
      \end{longtable}%
      \setlength{\tabcolsep}{\@nomenoldtabcolsep}%
      \def\@nomenstarted{0}%
  }
\newcommand{\BIG}{\bBigg@{3.5}}
\newcommand{\vast}{\bBigg@{4}}
\newcommand{\Vast}{\bBigg@{5}}
\newcommand{\VAST}[1]{\bBigg@{#1}}
\numberwithin{equation}{section}
\numberwithin{figure}{section}
\newtheorem{thm}{Theorem}[section]
\newtheorem{prop}[thm]{Proposition}
\newtheorem*{thm*}{Theorem}
\newtheorem*{prop*}{Proposition}
\newtheorem*{cor*}{Corollary}
\newtheorem*{conj*}{Conjecture}
\theoremstyle{definition}
\theoremstyle{remark}
\newcommand{\fakephantomsection}{%
  \Hy@MakeCurrentHref{\@currenvir.\the\Hy@linkcounter}
  \Hy@raisedlink{\hyper@anchorstart{\@currentHref}\hyper@anchorend}%
  \Hy@GlobalStepCount\Hy@linkcounter%
}
\newcommand{\mc}{\mathcal}
\newcommand{\cC}{\mc C}
\newcommand{\cL}{\mc L}
\newcommand{\cO}{\mc O}
\newcommand{\cU}{\mc U}
\newcommand{\cV}{\mc V}
\newcommand{\ms}{\mathscr}
\newcommand{\sR}{\ms R}
\newcommand{\N}{\mathbb{N}}
\newcommand{\R}{\mathbb{R}}
\newcommand{\Sph}{\mathbb{S}}
\newcommand{\sfG}{\mathsf{G}}
\newcommand{\Err}{{\mathrm{Err}}{}}
\newcommand{\ran}{\operatorname{ran}}
\newcommand{\Id}{\operatorname{Id}}
\newcommand{\tr}{\operatorname{tr}}
\newcommand{\dv}{\operatorname{div}}
\newcommand{\Ups}{\Upsilon}
\newcommand{\hra}{\hookrightarrow}
\newcommand{\la}{\langle}
\newcommand{\pa}{\partial}
\newcommand{\dd}{{\mathrm d}}
\newcommand{\ra}{\rangle}
\newcommand{\ul}[1]{\underline{#1}{}}
\newcommand{\wt}{\widetilde}
\newcommand{\Diff}{\mathrm{Diff}}
\newcommand{\loc}{{\mathrm{loc}}}
\newcommand{\CI}{\cC^\infty}
\newcommand{\CIdot}{\dot\cC^\infty}
\newcommand{\Ric}{\mathrm{Ric}}
\newcommand{\openbigpmatrix}[1]
  {%
    \def\@bigpmatrixsize{#1}%
    \addtolength{\arraycolsep}{-#1}%
    \begin{pmatrix}%
  }
\newcommand{\closebigpmatrix}
  {%
    \end{pmatrix}%
    \addtolength{\arraycolsep}{\@bigpmatrixsize}%
  }
\newlength{\enummargin}\setlength{\enummargin}{1.5em}
\newcommand*{\fwbw}[1]{\expandafter\@fwbw\csname c@#1\endcsname}
\newcommand*{\@fwbw}[1]{\ifcase #1 \or {\rm fw}\or {\rm bw}\fi}
\AddEnumerateCounter{\fwbw}{\@fwbw}
\begin{document}

\title[Asymptotically de~Sitter metrics from scattering data]{Asymptotically de~Sitter metrics from scattering data in all dimensions}

\date{\today}

\begin{abstract}
  In spacetime dimensions $n+1\geq 4$, we show the existence of solutions of the Einstein vacuum equations which describe asymptotically de~Sitter spacetimes with prescribed smooth data at the conformal boundary. This provides a short alternative proof of a special case of a result by Shlapentokh-Rothman and Rodnianski, and generalizes earlier results by Friedrich and Anderson to all dimensions.
\end{abstract}

\subjclass[2010]{Primary: 83C05. Secondary: 85C20}

\author{Peter Hintz}
\address{Department of Mathematics, ETH Z\"urich, R\"amistrasse 101, 8092 Z\"urich, Switzerland}
\email{peter.hintz@math.ethz.ch}

\maketitle

\section{Introduction}
\label{SI}

We study the existence of Lorentzian metrics $g$ on $M=[0,1)_\tau\times X$, where $X$ is an $n$-dimensional manifold, $n\geq 3$, satisfying the Einstein vacuum equations
\begin{equation}
\label{EqI}
  \Ric(g) - n g = 0,
\end{equation}
which are \emph{asymptotically de~Sitter} metrics. This means that $g=\tau^{-2}(-\dd\tau^2+h)$ modulo terms which are more regular at $\tau=0$; here $h$ is a Riemannian metric on $X$. The nomenclature arises from the special case that $(X,h)$ is $\R^n$ with the Euclidean metric $\dd x^2$: then $([0,\infty)_\tau\times\R^n,\tau^{-2}(-\dd\tau^2+\dd x^2))$ describes (a piece of) de~Sitter space. Specifically, we start by considering metrics $g_0$ which have asymptotic expansions
\begin{equation}
\label{EqIExp}
  g_0 \sim -\frac{\dd\tau^2}{\tau^2}
    +
      \begin{cases}
        \tau^{-2}\sum_{j\geq 0} h_j(\tau)(\tau^n\log\tau)^j, & n\ \text{even}, \\
        \tau^{-2}(h_0(\tau) + \tau^n h_1(\tau)), & n\ \text{odd},
      \end{cases}
\end{equation}
as $\tau\to 0$; here the $h_j(\tau)=h_j(\tau,x;\dd x)$ (with $h_0(0)=h$) are symmetric 2-tensors on $X$ which are even functions of $\tau$. It was shown by Fefferman--Graham \cite{FeffermanGrahamAmbient,FeffermanGrahamAmbientBook} that upon specifying two pieces of smooth data
\[
  h:=h_0(0),\quad
  k:=\pa_\tau^n h_0(0) \in \CI(X;S^2 T^*X),
\]
for which one must require $\tr_h k=0$ and $\dv_h k=D$ where $D=0$ when $n$ is odd whereas $D$ is a certain 1-form when $n$ is even, the formal version of equation~\eqref{EqI} at $\tau=0$ (i.e.\ requiring only that $\Ric(g_0)-n g_0$ vanish to infinite order at $\tau=0$) uniquely determines all remaining Taylor coefficients of the $h_j$. See \cite[Theorems~4.8 and 3.10]{FeffermanGrahamAmbientBook}; we give a brief self-contained derivation of this result in Appendix~\ref{SF}.\footnote{Conversely, by Borel's lemma, there exists a log-smooth tensor $g_0$ with the required generalized Taylor expansion~\eqref{EqIExp}, and then $\Ric(g_0)-n g_0$ is smooth and vanishes to infinite order at $\tau=0$.} More precisely, Fefferman--Graham studied the existence of (generalized) \emph{ambient metrics} associated with $(X,h)$: these are Lorentzian metrics $\wt g$ on $(-1,1)_\rho\times(0,\infty)_t\times X$ which are homogeneous of degree $2$ with respect to dilations in $t$, which at $\rho=0$ restrict to the degenerate metric $t^2 h$, and for which $\Ric(\wt g)$ vanishes to infinite order at $\rho=0$. When the ambient metric satisfies an additional condition, called \emph{straight} in \cite[Proposition~2.4]{FeffermanGrahamAmbientBook}, one can quotient out by the dilation action in a clean manner and obtain an asymptotically de~Sitter metric when $\rho$ has one sign (we shall refer to this region as the `exterior of the light cone'), and an asymptotically hyperbolic metric when $\rho$ has the opposite sign. In \cite[\S4]{FeffermanGrahamAmbientBook} the case of asymptotically hyperbolic metrics is discussed, but the asymptotically de~Sitter case is completely analogous.

A natural question is whether every \emph{formal} solution $g_0$ of~\eqref{EqI}, with smooth data $h$ and $k$, can be corrected, by a smooth symmetric 2-tensor $g'(\tau,x;\dd x)$ vanishing to infinite order at $\tau=0$, to a \emph{true} solution $g=g_0+g'$ of~\eqref{EqI}. Rodnianski and Shlapentokh-Rothman describe in \cite[\S{1.4.2}]{RodnianskiShlapentokhRothmanSelfSimilar} how their results on asymptotically self-similar spacetimes imply the existence of a true solution, in the smooth category, of a generalized ambient metric with Fefferman--Graham asymptotics in the exterior of the light cone. In the straight setting, their result thus already provides an affirmative answer to this question. In view of the complexities encountered in \cite{RodnianskiShlapentokhRothmanSelfSimilar} (which proves a significantly more general result), we give in this paper an alternative, short and elementary argument following the ideas of \cite[\S3.3]{HintzGluedS}.

\begin{thm}[Main result]
\label{ThmI}
  Suppose $g_0$ is a Lorentzian metric on $M=[0,1)_\tau\times X$ of the form~\eqref{EqIExp} for which $\Ric(g_0)-n g_0$ is smooth and vanishes to infinite order at $\tau=0$. Then on a sufficiently small open neighborhood $\cU$ of $\{0\}\times X\subset M$, there exists a unique smooth symmetric 2-tensor $g'(\tau)=g'(\tau,x;\dd x)$ which vanishes to infinite order at $\tau=0$ so that
  \begin{equation}
  \label{EqIEinstein}
    \Ric(g_0+g') - n(g_0+g') = 0\quad\text{in}\quad \cU.
  \end{equation}
\end{thm}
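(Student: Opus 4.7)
The plan is to apply the DeTurck trick to convert~\eqref{EqIEinstein} into a quasilinear wave equation, then conformally rescale so that $\tau = 0$ becomes a regular spacelike Cauchy hypersurface, and finally appeal to standard local existence. Introduce the wave map gauge $1$-form $\Upsilon(g;g_0) := g g_0^{-1} \delta_g G_g g_0$ and the DeTurck-modified operator
\[
  P(g) := \Ric(g) - n g - \delta_g^* \Upsilon(g;g_0).
\]
A standard computation identifies the principal part of $P$ on symmetric $2$-tensors with $\tfrac{1}{2}\Box_g$, so $P(g) = 0$ is quasilinear hyperbolic. Since $\Upsilon(g_0;g_0) = 0$, the hypothesis gives $P(g_0) = \Ric(g_0) - n g_0 =: f$ with $f$ smooth and $O(\tau^\infty)$. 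Writing $g = g_0 + g'$ reduces the problem to a quasilinear wave equation $L g' + Q(g') = -f$, with $L := D P(g_0)$ linear hyperbolic and $Q$ vanishing quadratically at $g' = 0$.

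The purpose of the conformal rescaling is that $\hat g_0 := \tau^2 g_0$ extends smoothly to $[0,1)_\tau \times X$ as a nondegenerate Lorentzian metric for which $\tau = 0$ is a spacelike hypersurface with induced metric $h$. Setting $\hat g' := \tau^2 g'$ and rewriting the gauged Einstein equation using the conformal transformation formulas for $\Ric$ and $\delta^*$, one obtains a quasilinear wave equation for $\hat g'$ with smooth coefficients on $[0,1) \times X$, principal part a nonvanishing multiple of $\Box_{\hat g_0}$, and source $\hat f$ smooth with $\hat f = O(\tau^\infty)$. Prescribing trivial Cauchy data $\hat g'|_{\tau=0} = \partial_\tau \hat g'|_{\tau=0} = 0$, standard local existence for quasilinear hyperbolic equations produces a unique smooth $\hat g'$ on some open neighborhood $\cU$ of $\{0\} \times X$. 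Differentiating the equation in $\tau$ at $\tau = 0$ shows inductively that every Taylor coefficient of $\hat g'$ vanishes there, so $\hat g' = O(\tau^\infty)$ and $g' := \tau^{-2} \hat g'$ is smooth on $\cU$ and vanishes to infinite order at $\tau = 0$.

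Finally, to promote the DeTurck solution to a true Einstein solution I would apply the contracted Bianchi identity $\delta_g G_g(\Ric(g) - n g) = 0$ to the relation $\Ric(g) - n g = \delta_g^* \Upsilon(g;g_0)$; this yields a linear homogeneous wave equation for $\Upsilon(g;g_0)$, whose vanishing to infinite order at $\tau = 0$ (a consequence of $g' = O(\tau^\infty)$) combined with hyperbolic uniqueness forces $\Upsilon(g;g_0) \equiv 0$ on $\cU$, hence $\Ric(g) = n g$ there. Uniqueness of $g'$ then follows by applying an energy estimate for $L$ to the difference of two such corrections. The step I expect to be the main obstacle is the conformal rescaling: one must verify that, despite the $\tau^{-2}$ blow-up of $g_0$, the conformal weights and the DeTurck modification combine to produce coefficients smooth down to $\tau = 0$ and a genuinely Lorentzian principal symbol there; this is the bookkeeping carried out in the approach of \cite[\S3.3]{HintzGluedS}.
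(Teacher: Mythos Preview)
Your proposal has a genuine gap that is precisely the obstruction the paper is designed to overcome. The conformal rescaling $\hat g_0 := \tau^2 g_0$ does \emph{not} extend smoothly to $\tau=0$ when $n$ is even: by the Fefferman--Graham expansion~\eqref{EqIExp}, one has $\tau^2 g_0 \sim -\dd\tau^2 + \sum_{j\geq 0} h_j(\tau)(\tau^n\log\tau)^j$, and the $\tau^n\log\tau$ term (with nonzero coefficient generically) destroys smoothness at order $n$. Consequently the rescaled quasilinear equation does not have smooth coefficients down to $\tau=0$, standard local existence produces only a solution of finite regularity, and your inductive differentiation argument for $\hat g'=O(\tau^\infty)$ breaks down at the order where the logarithms enter. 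This is exactly why the conformal method of Friedrich and Anderson, which your proposal essentially reproduces, is restricted to odd $n$; the paper's purpose is to treat all $n$.

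The paper avoids this by never leaving the degenerate (0-calculus) geometry: the gauge-fixed equation is analyzed directly as a quasilinear wave equation for a uniformly degenerate metric (Proposition~\ref{PropTrQL}), with coefficients required only to lie in $\CI+\tau^\eta\,{}^0\cC^\infty$ (which accommodates polyhomogeneous log terms). Existence is obtained not from Cauchy data at $\tau=0$ but from the limit $\delta\searrow 0$ of solutions with trivial data at $\tau=\delta$, controlled by weighted energy estimates with multiplier $\tau^{-2\alpha}\tau\pa_\tau$; taking $\alpha\to\infty$ gives the infinite-order vanishing. The same weighted estimate underlies the unique continuation result (Proposition~\ref{PropTrUC}) used to kill the gauge 1-form; your appeal to ``hyperbolic uniqueness'' for $\Ups$ again presupposes smooth coefficients at $\tau=0$. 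Finally, note that the uniqueness of $g'$ in the restricted form $g'(\tau,x;\dd x)$ is not a straightforward energy estimate for the gauge-fixed operator: the paper uses a wave map to put a competitor into the same gauge and then a geodesic argument to show the resulting diffeomorphism is the identity.
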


The existence of a smooth correction $g'=g'(\tau,x;\dd\tau,\dd x)=\cO(\tau^\infty)$ follows under significantly weaker assumptions on the structure of $g_0$, namely $\tau^2 g_0=-\dd\tau^2+h+\bar g'(\tau,x;\dd\tau,\dd x)$ where the coefficients of $\bar g'$ are bounded by some fixed power $\tau^\eta$, $\eta>0$, along with all their derivatives along $\tau\pa_\tau$ and $\tau\pa_x$. We construct the correction term $g'$ as the limit of backward solutions (i.e.\ imposing trivial data at $\tau=\delta$ and letting $\delta\searrow 0$) of a gauge-fixed version of the Einstein vacuum equations which are a system of quasilinear wave equations. (A similar procedure for linear and nonlinear wave equations was used in an analytically related setting by Petersen \cite{PetersenCauchyHorWave}.) We use a generalized harmonic gauge, namely a wave map/DeTurck gauge relative to background metric $g_0$. The resulting solution $g=g_0+g'$ satisfies the gauge condition to infinite order at $\tau=0$; but the gauge 1-form, measuring the failure of the gauge condition, satisfies a homogeneous wave equation on the asymptotically de~Sitter spacetime $(M,g)$. A unique continuation argument based on an energy estimate shows that this gauge 1-form must vanish, and thus $g$ solves the Einstein equations~\eqref{EqI}.

Our proof differs from previous arguments which treated the case of odd spatial dimensions $n\geq 3$. For $n=3$, Friedrich's work \cite{FriedrichDeSitterPastSimple} identifies $(h,k)$ with asymptotic data for the Einstein vacuum equations at the future conformal boundary of an (asymptotically) de~Sitter space, with $k$ having an interpretation as part of the rescaled Weyl tensor. In addition to proving the existence of solutions of~\eqref{EqI} attaining any given (smooth, or merely sufficiently regular) data $(h,k)$, Friedrich also proves, conversely, that small perturbations of non-characteristic initial data at $\tau=\tau_0>0$ evolve into asymptotically de~Sitter spacetimes, thus establishing a one-to-one correspondence between asymptotic data $(h,k)$ and asymptotically de~Sitter metrics solving~\eqref{EqI}. Anderson \cite{AndersonStabilityEvenDS} extended Friedrich's result to all odd $n\geq 3$. (See also \cite{RingstromEinsteinScalarStability} for forward stability results in general dimensions, which however do not yield a description of asymptotically de~Sitter spacetimes in terms of scattering data.)

We remark that in the case that $h,k$ are real-analytic, the convergence of the expansion~\eqref{EqIExp} was shown by Rendall \cite{RendallLambdaAsymptotics} (see \cite[Theorems~2 and 3]{RendallLambdaAsymptotics} for the formal power series construction and \cite[Theorem~6]{RendallLambdaAsymptotics} for the convergence statement). It also follows, by restriction to the exterior of the light cone in the straight setting, from the proof of convergence of the (generalized) Taylor expansion of the ambient metric by Kichenassamy \cite{KichenassamyFeffermanGraham}, which in turn improved upon the original result by Fefferman--Graham \cite{FeffermanGrahamAmbientBook}.

Finally, we remark that an elliptic problem related to the one considered here was solved in dimension $n\geq 3$ by Graham--Lee \cite{GrahamLeeConformalEinstein}; in this case only $h$ can be freely specified, and the existence of Poincar\'e--Einstein metrics close to the hyperbolic metric was shown when the datum $h$ is close to the standard metric on $\Sph^{n-1}$.

\subsection*{Acknowledgments}

I would like to thank the organizers of the conference \textit{At the Interface of Asymptotics, Conformal Methods, and Analysis in General Relativity} on May 9--10, 2023, at the Royal Society in London for putting together an inspiring meeting, and Marc Mars and Piotr Chru\'sciel for stimulating discussions. I gratefully acknowledge the hospitality of the Erwin Schr\"odinger Institute in Vienna in June 2023 during the writing of this paper.

\section{0-geometry}
\label{S0}

Let $M$ denote a smooth $(n+1)$-dimensional manifold with boundary $\pa M$; we assume that $\pa M$ is an embedded submanifold. By $\cV_0(M)$ we denote the space of 0-vector fields (or uniformly degenerate vector fields) \cite{MazzeoMelroseHyp}, consisting of all smooth vector fields vanishing at $\pa M$. In local coordinates $\tau\geq 0$, $x=(x^1,\ldots,x^n)\in\R^n$, elements of $\cV_0(M)$ are of the form $a(\tau,x)\tau\pa_\tau+\sum_{j=1}^n b^j(\tau,x)\tau\pa_{x^j}$ where $a,b^j$ are smooth. Locally finite sums of up to $m$-fold compositions of 0-vector fields yield the space $\Diff_0^m(M)$ of $m$-th order 0-differential operators. We introduce a vector bundle ${}^0 T M\to M$ with smooth frame $\tau\pa_\tau$, $\tau\pa_{x^j}$ ($j=1,\ldots,n$); over points $p\in M^\circ$ in the interior of $M$, the identity map induces an isomorphism ${}^0 T_p M\to T_p M$, but this map ceases to be injective for $p\in\pa M$. The dual bundle ${}^0 T^*M$ has a smooth frame $\frac{\dd\tau}{\tau}$, $\frac{\dd x^j}{\tau}$ ($j=1,\ldots,n$). An example of a Lorentzian signature section of the corresponding bundle $S^2\,{}^0 T^*M$ of symmetric 2-tensors is
\[
  \tau^{-2}\bigl(-\dd\tau^2 + h(x,\dd x)\bigr)
\]
where $h$ is a Riemannian metric on $\R^n$; compare this with~\eqref{EqIExp}.

We define ${}^0\cC^0(M)=L_\loc^\infty(M)\cap\cC^0(M^\circ)$. For $k\geq 1$, we inductively define ${}^0\cC^k(M)$ as the space of all $u\in{}^0\cC^{k-1}(M)$ so that $V u\in{}^0\cC^{k-1}(M)$ for all $V\in\cV_0(M)$. If $\tau\in\CI(M)$ is a boundary defining function, i.e.\ $\pa M=\tau^{-1}(0)$ and $\dd\tau\neq 0$ on $\pa M$, then for $\eta\in\R$ we also have weighted spaces $\tau^\eta\,{}^0\cC^k(M)=\{\tau^\eta u\colon u\in{}^0\cC^k(M)\}$. Elements of $\Diff_0^m(M)$ act continuously between such spaces; one can also consider operators of class $\tau^\eta\,{}^0\cC^k\Diff_0^m(M)$ which are sums of operators of the form $a P$ where $a\in\tau^\eta\,{}^0\cC^k(M)$ and $P\in\Diff_0^m(M)$. Note next that $\tau^\eta\,{}^0\cC^0(M)\subset\cC^0(M)$ for $\eta>0$. More generally, since a smooth vector field on $M$ is $\tau^{-1}$ times a 0-vector field,
\[
  \tau^{l+\eta}\,{}^0\cC^k(M) \subset \cC^k(M),\qquad k\leq l\in\N_0,\quad \eta>0.
\]
In particular, we have $\bigcap_{l,k}\tau^{l+\eta}\,{}^0\cC^k(M)=\CIdot(M)$, the space of smooth functions on $M$ vanishing to infinite order at $\pa M$.

We similarly have $L^2$-based Sobolev spaces, which naturally arise in energy estimates. We define $H_{0,\loc}^0(M):=L_\loc^2(M)$ to be the local $L^2$-space with respect to a positive 0-density, i.e.\ in local coordinates $\tau\geq 0$, $x\in\R^n$ as above a smooth positive multiple of $|\frac{\dd\tau}{\tau}\frac{\dd x}{\tau^n}|=|\frac{\dd\tau}{\tau}\frac{\dd x^1}{\tau}\cdots\frac{\dd x^n}{\tau}|$; and then $H_{0,\loc}^k(M)$ is defined to consist of all $u\in H_{0,\loc}^{k-1}(M)$ so that $V u\in H_{0,\loc}^{k-1}(M)$ for all $V\in\cV_0(M)$. If, in local coordinates, near any point $(\tau_0,x_0)$, $\tau_0>0$, one introduces the coordinates $T=\frac{\tau-\tau_0}{\tau_0}$ and $X=\frac{x-x_0}{\tau_0}$, then for bounded $|T|+|X|$, the vector fields $\tau\pa_\tau$, $\tau\pa_x$ on the one hand and $\pa_T$, $\pa_X$ on the other hand are linear combinations of each other with uniformly bounded and smooth (in $T,X$) coefficients; thus, Sobolev embedding on the unit ball in $\R^{n+1}$ implies $\tau^\alpha H_0^m(M)\subset\tau^\alpha\,{}^0\cC^k(M)$ for $\alpha\in\R$ and $m>\frac{n+1}{2}+k$.

When $M$ is compact, the spaces ${}^0\cC^k(M)$ are Banach spaces, with the norm of $u$ given by the maximum of the sup norms of $u$ and all its up to $k$-fold derivatives along the elements of a fixed finite spanning set of $\cV_0(M)$; similarly for the spaces $\tau^\eta\,{}^0\cC^k(M)$. Likewise, the spaces $H_0^k(M)=H_{0,\loc}^k(M)$ and their weighted analogues $\tau^\alpha H_0^k(M)=\{\tau^\alpha u\colon u\in H_0^k(M)\}$ can be given the structure of Hilbert spaces. Analogous constructions apply when $M$ is noncompact but one restricts to functions with support in a fixed compact subset of $M$.

The Rellich compactness theorem for compact $M$ states that the inclusion $\tau^\alpha H_0^k(M)\hra\tau^\beta H_0^l(M)$ is compact when $\alpha>\beta$ and $k>l$. This follows from the standard compactness theorem applied to an exhaustion of $M$ by smoothly bounded domains whose closures remain disjoint from $\pa M$.

\section{Correction of formal solutions to true solutions}
\label{STr}

We prove a general unique continuation result (Proposition~\ref{PropTrUC}) for linear wave equations and an existence result for quasilinear wave equations (Proposition~\ref{PropTrQL}) on asymptotically de~Sitter spacetimes. We shall not optimize the smoothness requirements on source terms of waves and on the coefficients of operators.

Let $X$ be an $n$-dimensional manifold. Let $g$ denote a Lorentzian metric on $M=[0,1)_\tau\times X$ which is of the form
\begin{equation}
\label{EqTrMetric}
  g = \tau^{-2}\bar g,\qquad \bar g(\tau,x,\dd\tau,\dd x)=-\dd\tau^2+h(x,\dd x)+\bar g'(\tau,x,\dd\tau,\dd x),
\end{equation}
where $h\in\cC^1(X;S^2 T^*X)$ is a Riemannian metric on $X$ and $\bar g'\in\tau^\eta\,{}^0\cC^1(M;S^2 T^*M)$ for some $\eta>0$. In other words, $g\equiv\tau^{-2}(-\dd\tau^2+h)\bmod\tau^\eta\,{}^0\cC^1(M;S^2\,{}^0 T^*M)$. Under these assumptions, $\dd\tau$ and $-\pa_\tau$ are timelike for $\bar g$ near $\tau=0$, and we declare them to be \emph{future} timelike. Our analysis in this section will be local near points in $\pa M$. Let thus $x\in\R^n$ denote local coordinates on $X$. Let $\tau_0\in(0,\frac12]$ and consider a domain
\begin{equation}
\label{EqTrDomain}
  \Omega^{\tau_0} = \bigl\{ (\tau,x) \colon 0\leq\tau\leq\tau_0,\ |x|\leq R_0 - C\tau \bigr\},
\end{equation}
where $R_0>C\tau_0$, with $R_0>0$ so small and $C>0$ so large (and thus $\tau_0$ so small) that $\Omega^{\tau_0}$ is contained in the coordinate chart and so that $\{(\tau,x)\colon 0\leq\tau\leq\tau_0,\ |x|=R_0-C\tau\}$ and $\Omega^{\tau_0}\cap\{\tau=\tau_0\}$ are spacelike for $\bar g$. For $\delta\in(0,\tau_0]$, we further let
\begin{equation}
\label{EqTrDomSub}
  \Omega^{\tau_0}_\delta = \Omega^{\tau_0} \cap \{\tau\geq\delta\},\qquad
  \Sigma_\delta = \Omega^{\tau_0} \cap \{\tau=\delta\}.
\end{equation}
See Figure~\ref{FigTrDomain}.
\begin{figure}[!ht]
\centering
\includegraphics{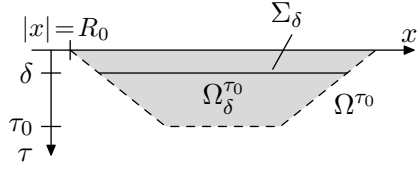}
\caption{The domain \eqref{EqTrDomain} and its subsets~\eqref{EqTrDomSub}.}
\label{FigTrDomain}
\end{figure}
We write $\CIdot(\Omega^{\tau_0})$ for the space of smooth functions on $\Omega^{\tau_0}$ (restrictions of smooth functions on $M$) which vanish to infinite order at $\tau=0$. We restrict to operators acting on real-valued functions or vectors for notational simplicity. We furthermore write
\[
  {}^0\nabla u := (\tau\pa_\tau u,\tau\pa_x u) = (\tau\pa_\tau u,\tau\pa_{x^1}u,\ldots,\tau\pa_{x^n}u),\qquad
  |{}^0\nabla u|^2 := |\tau\pa_\tau u|^2+|\tau\pa_x u|^2.
\]

\subsection{Unique continuation for linear wave equations}

We prove a localized and sharpened version of \cite[Lemma~1]{ZworskiRevisitVasy}; see also \cite[Proposition~5.3]{VasyWaveOndS}. (For an elliptic analogue, see \cite{MazzeoUniqueContinuation}.)

\begin{prop}[Unique continuation]
\label{PropTrUC}
  Suppose $L=L_0+\tilde L$, with $L_0\in\Diff_0^2(M;\R^k)$ and $\tilde L\in\tau^\eta\,{}^0\cC^1\Diff_0^2(M;\R^k)$, has a scalar principal symbol given by the dual metric function $\zeta\mapsto g^{-1}(\zeta,\zeta)$. Then there exists $N<\infty$, depending only on the coefficients of $L_0$ (expressed in terms of $\tau\pa_\tau$, $\tau\pa_x$) at $\Omega^{\tau_0}\cap\{\tau=0\}$, so that every solution $u\in{}^0\cC^1(\Omega^{\tau_0};\R^k)$ which solves $L u=0$ and satisfies $|u|,|{}^0\nabla u|\leq C\tau^N$ for some $C\in\R$ must satisfy $u=0$ in $\Omega^{\tau_0}$.
\end{prop}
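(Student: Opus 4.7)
The plan is to establish a Carleman-type estimate with weight $\tau^{-s}$ and then use it to propagate the vanishing of $u$ from $\tau=0$ into $\Omega^{\tau_0}$. Specifically, I would first prove: there exist $s_0,C<\infty$, depending only on the coefficients of $L_0$ (in the $\tau\pa_\tau,\tau\pa_x$ frame) at $\tau=0$, such that
\begin{equation*}
  s\int_{\Omega^{\tau_0}}|{}^0\nabla u|^2\tau^{-2s}\,\mu + s^2\int_{\Omega^{\tau_0}}|u|^2\tau^{-2s}\,\mu \leq C\int_{\Omega^{\tau_0}}|L u|^2\tau^{-2s}\,\mu,\qquad s\geq s_0,
\end{equation*}
for $u\in\CIdot(\Omega^{\tau_0};\R^k)$ with $\supp u\Subset\Omega^{\tau_0}\setminus\Sigma_{\tau_0}$; here $\mu:=\tau^{-(n+1)}d\tau\,dx$ is a positive $0$-density.

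To derive this estimate I would use the standard conjugation method. Setting $v:=\tau^{-s}u$ and $P_s:=\tau^{-s}L\tau^s$, and using $\tau^{-s}(\tau\pa_\tau)\tau^s = \tau\pa_\tau+s$, the operator $P_s$ has the same scalar principal symbol as $L$, with the first-order coefficient along $\tau\pa_\tau$ shifted by $-2s$ and the zeroth-order part shifted by $-s^2+O(s)$. Decomposing $P_s = P_s^+ + P_s^-$ into its $L^2(\mu)$-formally symmetric and skew parts and expanding
\begin{equation*}
  \|P_s v\|^2 = \|P_s^+ v\|^2 + \|P_s^- v\|^2 + \langle[P_s^+,P_s^-]v,v\rangle,
\end{equation*}
the commutator, computed to leading order in $s$, produces positive bulk contributions of size $s|{}^0\nabla v|^2 + s^2|v|^2$; positivity relies on the scalar principal symbol and on the Riemannian signature of $h$, so that the tangential contribution from $h^{ij}(\tau\pa_i)(\tau\pa_j)$ has a good sign. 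The perturbation $\tilde L\in\tau^\eta\,{}^0\cC^1\Diff_0^2$ contributes $O(\tau_0^\eta)$-errors, absorbable by shrinking $\tau_0$. Boundary terms from the integrations by parts are then handled as follows: the spatial boundary $\{|x|=R_0-C\tau\}$ is spacelike for $\bar g$, so its outward conormal is timelike and the boundary integral has a definite good sign; the top boundary $\Sigma_{\tau_0}$ is avoided by the compact support assumption.

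Given the Carleman estimate, I apply it to $u_\chi:=\chi(\tau)u$ where $\chi\in\CIc([0,\tau_0))$ satisfies $\chi\equiv 1$ on $[0,\tau_0/2]$, after a standard density approximation (justified because the decay hypothesis $|u|,|{}^0\nabla u|\leq C\tau^N$ places $\tau^{-s}u_\chi\in L^2(\mu)$ provided $N>s+\tfrac{n+1}{2}$). Since $Lu=0$, one has $Lu_\chi=[L,\chi]u$ supported in $\{\tau_0/2\leq\tau\leq\tau_0\}$, so $\int|L u_\chi|^2\tau^{-2s}\mu\leq C(\tau_0/2)^{-2s}$. Restricting the LHS to $\{\tau\leq\tau_1\}$ for any $\tau_1<\tau_0/2$ yields
\begin{equation*}
  s^2\tau_1^{-2s}\int_{\{\tau\leq\tau_1\}}|u|^2\mu \leq C(\tau_0/2)^{-2s},
\end{equation*}
so $\int_{\{\tau\leq\tau_1\}}|u|^2\mu\leq Cs^{-2}\bigl(\tau_1/(\tau_0/2)\bigr)^{2s}\to 0$ as $s\to\infty$. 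Thus $u\equiv 0$ on $\{\tau\leq\tau_0/2\}\cap\Omega^{\tau_0}$, and standard uniqueness for the non-characteristic Cauchy problem for $L$ on the spacelike hypersurface $\Sigma_{\tau_0/2}$ extends this to $u\equiv 0$ throughout $\Omega^{\tau_0}$. The threshold condition is $N>s_0+\tfrac{n+1}{2}$, which determines the $N$ in the statement.

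The main obstacle is verifying the precise $s$-dependent positivity of the commutator $[P_s^+,P_s^-]$ in the presence of matrix-valued lower-order coefficients of $L$; an equivalent and more computationally transparent route proceeds via a direct vector-field energy estimate with multiplier $-\tau\pa_\tau u$ weighted by $\tau^{-2s}\mu$, augmented by lower-order $s$-dependent corrections in the multiplier to extract the full $s^2|u|^2$ coercivity. Because the leading coercivity mechanism — and thus the threshold $s_0$ — is determined by the symbol of $P_s$ on the boundary face $\tau=0$, the resulting $N$ depends only on the coefficients of $L_0$ there, as claimed.
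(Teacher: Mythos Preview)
Your overall strategy---a Carleman estimate with weight $\tau^{-s}$ followed by the standard absorption argument---does yield the result, and the ``equivalent and more computationally transparent route'' you mention at the end is exactly what the paper does. The paper applies the stress-energy tensor $T^g[u]$ with the weighted multiplier $V_\alpha=\tau^{-2\alpha}\tau\pa_\tau$; the deformation tensor of $V_\alpha$ produces directly a bulk term $\alpha\,\tau^{-2\alpha}|{}^0\nabla u|^2$ (linear in $\alpha$), and the missing control of $|u|^2$ is then supplied by a one-dimensional Hardy inequality along the $\tau$-integral curves rather than by an $s^2$-coercivity built into the estimate. The paper also works on the truncated domain $\Omega^{\tau_0}_\delta$ and sends $\delta\searrow 0$, which replaces your cutoff-and-density step and keeps the argument at the ${}^0\cC^1$ level throughout.

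One point in your primary route deserves correction: the commutator $[P_s^+,P_s^-]$ does \emph{not} by itself produce the positive tangential contribution you claim. Writing $D=\tau\pa_\tau-\tfrac{n}{2}$ (skew) and $\Lambda=-h^{ij}(\tau\pa_i)(\tau\pa_j)\geq 0$, the leading parts are $P_s^+\approx D^2+\Lambda+(s+\tfrac{n}{2})^2$ and $P_s^-\approx(2s+n)D$; since $[\tau\pa_\tau,\tau\pa_j]=\tau\pa_j$ one finds $[\Lambda,D]=-2\Lambda$, hence $\la[P_s^+,P_s^-]v,v\ra\approx-2(2s+n)\la\Lambda v,v\ra\leq 0$. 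The positivity in your expansion actually comes from $\|P_s^-v\|^2\approx(2s+n)^2\|Dv\|^2$ together with the cross term $2(s+\tfrac{n}{2})^2\la\Lambda v,v\ra$ inside $\|P_s^+v\|^2$, which for large $s$ dominate the negative commutator and give $s^2\|{}^0\nabla v\|^2+s^4\|v\|^2$. So your Carleman inequality is valid, but through a different mechanism than the one you sketch; the paper's energy-identity route sidesteps this computation and also makes the lateral boundary terms manifestly of the good sign via stress-energy positivity.
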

\begin{proof}
  By finite speed of propagation, it suffices to prove the result for some arbitrarily small but positive $\tau_0$. We restrict to the case that $k=1$ and leave the largely notational modifications needed in the general case to the reader. Consider the past timelike vector field $V_0=\tau\pa_\tau$. For $\alpha>0$ to be determined, we will use $V_\alpha=\tau^{-2\alpha}\tau\pa_\tau=\tau^{-2\alpha}V_0$ as a vector field multiplier in an energy estimate. Recall the stress-energy-momentum tensor
  \[
    T^g[u](V,W) := (V u)(W u) - \frac12 g(V,W)|\nabla^g u|^2.
  \]
  Writing $\Box_g u=-g^{j k}u_{;j k}$, we have $\dv_g T^g[u] = -(\Box_g u)\dd u$. With $\cL_{V_\alpha}g=\cL_{\tau^{-2\alpha}V_0}g=\tau^{-2\alpha}\cL_{V_0}g+2\dd(\tau^{-2\alpha})\otimes_s g(V_0,-)=\tau^{-2\alpha}\cL_{V_0}g-4\alpha\tau^{-2\alpha}\frac{\dd\tau}{\tau}\otimes_s g(V_0,-)$, we compute
  \begin{equation}
  \label{EqTrUCDiv}
  \begin{split}
    \dv_g\bigl(T^g[u](V_\alpha,-)\bigr) &= -(\Box_g u)V_\alpha u + \frac12\la\cL_{V_\alpha}g,T^g[u]\ra \\
      &= \tau^{-2\alpha} \Bigl( -(\Box_g u) V_0 u - 2\alpha T^g[u]\Bigl(\Bigl(\frac{\dd\tau}{\tau}\Bigr)^\sharp,V_0\Bigr) + \frac12\la \cL_{V_0}g,T^g[u]\ra \Bigr),
  \end{split}
  \end{equation}
  where $g((\frac{\dd\tau}{\tau})^\sharp,\cdot)=\frac{\dd\tau}{\tau}$. Since $L-\Box_g\in\Diff_0^1+\tau^\eta\,{}^0\cC^1\Diff_0^1$, we have $|((L-\Box_g)u)V_0 u|\leq C(|{}^0\nabla u|^2+|u|^2)$ where $C=C_0+\cO(\tau)+\tau^\eta C_1$, with $C_0$, resp.\ $C_1$ depending only on the coefficients of $L_0$ at $\tau=0$, resp.\ the $\tau^\eta\,{}^0\cC^1$-norm of the coefficients of $\tilde L$ in $\Omega^{\tau_0}$, with the $\cO(\tau)$ term arising from the $\tau$-dependence of the coefficients of $L_0$; thus one can take $C=2 C_0$ when $\tau_0>0$ is sufficiently small (depending on $C_1$ and the size of the $\cO(\tau)$ term). Similar estimates apply to the third term. For the second term, note that $(\frac{\dd\tau}{\tau})^\sharp\equiv-\tau\pa_\tau\bmod\tau^\eta\,{}^0\cC^1$; thus $-2\alpha T^g[u]((\frac{\dd\tau}{\tau})^\sharp,V_0)$ equals $\alpha((\tau\pa_\tau u)^2+|\tau\,{}^h\nabla u|_h^2)$ plus an error of class $(\tau^\eta\,{}^0\cC^1+\cO(\tau))|{}^0\nabla u|^2$. Choosing $\alpha$ large compared to $C$, we thus get the pointwise bound
  \[
    \dv_g\bigl(T^g[u](V_\alpha,-)\bigr) \geq \tau^{-2\alpha}\bigl(C^{-1}\alpha|{}^0\nabla u|^2 - C|u|^2\bigr).
  \]
  Since $-T^g[u](V_\alpha,(\frac{\dd\tau}{\tau})^\sharp)\sim\tau^{-2\alpha}|{}^0\nabla u|^2$ on $\Omega^{\tau_0}$, this gives (with a new constant $C$)
  \begin{equation}
  \label{EqTrUCEnergyEst}
  \begin{split}
    &\tau_0^{-2\alpha}\int_{\Sigma_{\tau_0}} |{}^0\nabla u|^2\,\frac{\dd x}{\tau_0^n} + \alpha \iint_{\Omega^{\tau_0}_\delta} \tau^{-2\alpha}|{}^0\nabla u|^2\,\frac{\dd\tau}{\tau}\frac{\dd x}{\tau^n} \\
    &\qquad \leq C\delta^{-2\alpha}\int_{\Sigma_\delta} |{}^0\nabla u|^2\,\frac{\dd x}{\delta^n} + C\iint_{\Omega^{\tau_0}_\delta}\tau^{-2\alpha}(|u|^2+|L u|^2)\,\frac{\dd\tau}{\tau}\frac{\dd x}{\tau^n}
  \end{split}
  \end{equation}
  for all $\delta\in(0,\tau_0]$. Regarding the final integral, consider a fixed value of $x$ and let $\tau_1=\min(\tau_0,C^{-1}(R_0-|x|))$; then
  \begin{align*}
    &\int_\delta^{\tau_1} \tau^{-2\alpha}|u(\tau,x)|^2\,\tau^{-n}\frac{\dd\tau}{\tau} = \int_\delta^{\tau_1} \tau^{-2\alpha}\biggl| u(\delta,x) + \int_\delta^\tau \rho^\alpha\,\rho^{-\alpha}(\rho\pa_\rho u)(\rho,x)\,\frac{\dd\rho}{\rho}\biggr|^2\,\tau^{-n}\frac{\dd\tau}{\tau} \\
    &\qquad\leq \frac12\int_\delta^{\tau_1} \tau^{-2\alpha}\biggl(|u(\delta,x)|^2 + \biggl(\int_\delta^\tau \rho^{2\alpha}\frac{\dd\rho}{\rho}\biggr)\biggl(\int_\delta^\tau \rho^{-2\alpha}|(\rho\pa_\rho u)(\rho,x)|^2\,\frac{\dd\rho}{\rho}\biggr)\biggr)\,\tau^{-n}\frac{\dd\tau}{\tau} \\
    &\qquad\leq \frac{C}{\alpha}\delta^{-2\alpha}|u(\delta,x)|^2\delta^{-n} + \frac{1}{4\alpha}\int_\delta^{\tau_1}\int_\delta^\tau \rho^{-2\alpha}|(\rho\pa_\rho u)(\rho,x)|^2\,\frac{\dd\rho}{\rho}\,\tau^{-n}\frac{\dd\tau}{\tau} \\
    &\qquad\leq \frac{C}{\alpha}\biggl( \delta^{-2\alpha}|u(\delta,x)|^2\delta^{-n} + \int_\delta^{\tau_1} \rho^{-2\alpha}|(\rho\pa_\rho u)(\rho,x)|^2\,\rho^{-n}\frac{\dd\rho}{\rho}\biggr)
  \end{align*}
  where we exchanged the $\rho$- and $\tau$-integrations in the last step. Integrating over $x$ gives
  \begin{equation}
  \label{EqTrUCPoincare}
    \iint_{\Omega^{\tau_0}_\delta} \tau^{-2\alpha}|u|^2\,\frac{\dd\tau}{\tau}\frac{\dd x}{\tau^n} \leq \frac{C}{\alpha}\biggl( \delta^{-2\alpha}\int_{\Sigma_\delta} |u|^2\,\frac{\dd x}{\delta^n} + \iint_{\Omega^{\tau_0}_\delta} \tau^{-2\alpha}|\tau\pa_\tau u|^2\,\frac{\dd\tau}{\tau}\frac{\dd x}{\tau^n} \biggr).
  \end{equation}
  Combining~\eqref{EqTrUCEnergyEst} without the first term on the left, and using $L u=0$, with \eqref{EqTrUCPoincare} gives
  \[
    \iint_{\Omega^{\tau_0}_\delta} \tau^{-2\alpha} \bigl( |u|^2 + |{}^0\nabla u|^2 \bigr)\,\frac{\dd\tau}{\tau}\frac{\dd x}{\tau^n} \leq \frac{C}{\alpha}\delta^{-2\alpha} \int_{\Sigma_\delta} \bigl( |u|^2 + |{}^0\nabla u|^2 \bigr)\,\frac{\dd x}{\delta^n}
  \]
  for all sufficiently large $\alpha$ when $\tau_0>0$ is sufficiently small. If $N$ in the statement of the proposition satisfies $-2\alpha+2 N-n>0$, then the right hand side is $o(1)$ as $\delta\searrow 0$, so letting $\delta\searrow 0$ implies that $u=0$ on $\Omega^{\tau_0}$, finishing the proof.
\end{proof}

\subsection{Quasilinear waves with rapidly decaying sources}

\begin{prop}[Backwards solution of quasilinear wave equations]
\label{PropTrQL}
  Let $g$ be as in~\eqref{EqTrMetric} with $h\in\CI(X;S^2 T^*X)$ Riemannian and $\bar g'\in\tau^\eta\,{}^0\cC^\infty(M;S^2 T^*M)$. We work in local coordinates $\tau\geq 0$, $x\in\R^n$, and recall~\eqref{EqTrDomain}. Write $\ul\R^k=M\times\R^k$ for the trivial bundle. Consider a quasilinear wave operator
  \begin{equation}
  \label{EqTrQLOp}
    P(u) := \Box_{G(\tau,x;u)}u + P_1(\tau,x;u,{}^0\nabla u),
  \end{equation}
  where $u$ takes values in $\R^k$, and $G$, resp.\ $P_1$ are nonlinear bundle maps $\ul\R^k\to S^2\,{}^0 T^*M$, resp.\ $\ul\R^k\oplus\ul\R^{(n+1)k}\to\ul\R^k$ which are smooth in the fiber variables and $\CI+\tau^\eta\,{}^0\cC^\infty$ in $(\tau,x)$, and with $G(\tau,x;0)=g|_{(\tau,x)}$. Suppose that $P(0)\in\CIdot(\Omega^{\tau_0};\R^k)$. Then for sufficiently small $\tau_1\in(0,\tau_0]$, there exists a unique solution $u\in\CIdot(\Omega^{\tau_1};\R^k)$ of $P(u)=0$, and the level sets of $\tau$ as well as the boundary hypersurfaces of $\Omega^{\tau_1}$ are spacelike for $G(\tau,x;u)$.
\end{prop}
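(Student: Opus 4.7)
The plan is to obtain $u$ as the limit, as $\delta\searrow 0$, of smooth backward solutions $u_\delta\in\CI(\Omega^{\tau_1}_\delta;\R^k)$ of $P(u_\delta)=0$ with vanishing Cauchy data on $\Sigma_\delta$. For each fixed small $\delta\in(0,\tau_0]$, the existence of such $u_\delta$ on some $\Omega^{\tau_1(\delta)}_\delta$ follows from standard local well-posedness for quasilinear wave equations, since $\Sigma_\delta$ is spacelike for $G(\tau,x;v)$ when $v$ is small. The heart of the proof is then to show that $\tau_1$ may be chosen independent of small $\delta$ and that the $u_\delta$ satisfy uniform bounds in $\tau^\alpha H_0^k(\Omega^{\tau_1}_\delta)$ for every $\alpha>0$ and $k\in\N_0$. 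Given this, Rellich compactness from Section~\ref{S0} extracts a subsequence whose limit $u$ lies in $\bigcap_{\alpha,k}\tau^\alpha H_0^k=\CIdot(\Omega^{\tau_1})$.

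The tool for the uniform bounds is the weighted energy estimate established in the proof of Proposition~\ref{PropTrUC}. For any $w$ small in ${}^0\cC^1$, the linearization $L_w$ of $P$ at $w$ has scalar principal symbol $\zeta\mapsto G(w)^{-1}(\zeta,\zeta)$ and lies in the class $\Diff_0^2+\tau^\eta\,{}^0\cC^1\Diff_0^2$ required there, because $G(w)-g\in\tau^\eta\,{}^0\cC^1$. Repeating the computation~\eqref{EqTrUCDiv}--\eqref{EqTrUCPoincare}, now applied to the backward problem $L_w v=f$ with vanishing data on $\Sigma_\delta$, gives
\[
  \alpha\iint_{\Omega^{\tau_1}_\delta}\tau^{-2\alpha}\bigl(|v|^2+|{}^0\nabla v|^2\bigr)\,\frac{\dd\tau}{\tau}\frac{\dd x}{\tau^n} \leq C\iint_{\Omega^{\tau_1}_\delta}\tau^{-2\alpha}|f|^2\,\frac{\dd\tau}{\tau}\frac{\dd x}{\tau^n}
\]
for all sufficiently large $\alpha$, uniformly in $\delta\in(0,\tau_1]$; higher-order weighted $H_0^k$ estimates follow by commuting 0-vector fields $\tau\pa_\tau,\tau\pa_{x^j}\in\cV_0$ through $L_w$, whose commutators lie in $\Diff_0^2$ with coefficients controlled by lower-order norms of $w$. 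A Picard iteration
\[
  \Box_{G(u^{(j)})}u^{(j+1)} = -P_1(u^{(j)},{}^0\nabla u^{(j)}), \qquad u^{(j+1)}|_{\Sigma_\delta}=0,\ \pa_\tau u^{(j+1)}|_{\Sigma_\delta}=0,
\]
starting from $u^{(0)}=0$, then contracts in a small ball of $\tau^\alpha H_0^k$: the first iterate $u^{(1)}$ is driven purely by $P(0)\in\CIdot$, whose $\tau^{-\alpha}$-weighted $L^2$-norm is finite for every $\alpha$; and successive differences are driven by source terms quadratic in $u^{(j)}-u^{(j-1)}$, hence super-critically small by the Sobolev embedding $\tau^\alpha H_0^k\subset\tau^\alpha\,{}^0\cC^l$ (Section~\ref{S0}). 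This yields $u_\delta$ on a $\delta$-independent $\Omega^{\tau_1}$ with the desired uniform bounds.

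With $u_\delta$ in hand, passing to the limit via Rellich produces $u\in\CIdot(\Omega^{\tau_1})$ solving $P(u)=0$; the $\tau$-levels and the boundary hypersurfaces of $\Omega^{\tau_1}$ remain spacelike for $G(u)$ because $G(u)-g\in\CIdot$. Uniqueness follows immediately from Proposition~\ref{PropTrUC}: the difference $w=u-u'$ of two $\CIdot$ solutions satisfies a homogeneous linear wave equation of the form required there (since $G(u),G(u')\in g+\CIdot\subset g+\tau^\eta\,{}^0\cC^1$), with $|w|$ and $|{}^0\nabla w|$ of order $\cO(\tau^\infty)$, hence vanishes. The main technical obstacle is closing the uniform-in-$\delta$ Picard iteration: the weight $\alpha$ must be chosen large depending on the Sobolev order $k$ so that the Poincar\'e constant in~\eqref{EqTrUCPoincare} provides genuine absorption, and the quasilinear commutator errors arising from the $u^{(j)}$-dependence of $G$ must be controlled inductively at each level. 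This is just possible because $P(0)$ is infinitely flat and the iterates stay in arbitrarily small neighbourhoods of $0$ in $\tau^\alpha H_0^k$ for every $\alpha,k$.
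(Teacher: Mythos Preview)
Your overall strategy matches the paper's: solve backward problems with vanishing Cauchy data at $\tau=\delta$, prove $\delta$-uniform weighted bounds via the multiplier $V_\alpha=\tau^{-2\alpha}\tau\pa_\tau$, and extract a limit $u\in\CIdot$ by compactness; uniqueness likewise comes from Proposition~\ref{PropTrUC}. The difference lies in how the uniform bounds are obtained.

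The paper does \emph{not} run a Picard iteration. It first invokes standard local well-posedness to produce the nonlinear solution $u_\delta$ on a maximal slab $\Omega_\delta^{\tau(\delta)}$, and then proves \emph{a priori} estimates on $u_\delta$ directly. Differentiating $P(u_\delta)=0$ by $A_{j,\beta}=(\tau\pa_\tau)^j(\tau\pa_x)^\beta$ for $j+|\beta|\leq M$, handling the resulting nonlinear products by a Klainerman-type splitting (the one factor carrying more than $\tfrac{M+1}{2}$ derivatives in $\tau^\alpha L^2$, the others in $L^\infty$ via Sobolev on $\Sigma_\tau$), and packaging the slice norms into $U_\delta(\tau)=\sum_{j+|\beta|\leq M+1}\tau^{-2\alpha}\int_{\Sigma_\tau}|A_{j,\beta}u_\delta|^2\,\tau^{-n}\dd x$ yields an integral inequality $U_\delta(\tau)\leq C_N\delta^N+\int_\delta^\tau C(U_\delta)\,U_\delta$. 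Gr\"onwall then gives a $\delta$-independent existence time $\tau_1$ and uniform bounds simultaneously.

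Your Picard scheme can be made to work, but not for the reason you state. The difference $v=u^{(j+1)}-u^{(j)}$ satisfies
\[
  \Box_{G(u^{(j)})}v=-\bigl(P_1(u^{(j)})-P_1(u^{(j-1)})\bigr)-\bigl[\Box_{G(u^{(j)})}-\Box_{G(u^{(j-1)})}\bigr]u^{(j)},
\]
whose source is \emph{linear} in $u^{(j)}-u^{(j-1)}$, not quadratic: the linearization $L_1$ of $P_1$ contributes $L_1(u^{(j)}-u^{(j-1)})$ with $O(1)$ coefficient, and the quasilinear bracket contributes terms of the schematic form $(\pa_u G)(u^{(j)}-u^{(j-1)})\cdot{}^0\nabla^2 u^{(j)}$. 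Sobolev embedding alone does not make this small. What actually closes the iteration is the factor $\alpha$ on the left of your displayed estimate, which yields $\|v\|_{\tau^\alpha H_0^1}\leq C\alpha^{-1/2}\|\text{source}\|_{\tau^\alpha L^2}$; contraction then comes from taking $\alpha$ large, not from quadratic smallness. You also need to run the argument at two regularity levels (bound the iterates in a high norm so the quasilinear commutator is controlled, contract in a lower norm), a point your sketch elides. The paper's a priori/Gr\"onwall route sidesteps this bookkeeping since one works with the actual solution throughout.
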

\begin{proof}
  We restrict to the case $k=1$ for notational simplicity. We write
  \[
    P_1(\tau,x;u,{}^0\nabla u) = f + L_1^\sharp u + N^\sharp(\tau,x;u,{}^0\nabla u)
  \]
  where $f=P_1(\tau,x;0,0)\in\CIdot(\Omega^{\tau_0})$, $L_1^\sharp$ is the linearization of $P_1$ around $u=0$ and thus satisfies $L_1^\sharp\in\Diff_0^1+\tau^\eta\,{}^0\CI\Diff^1_0$, and $N^\sharp(\tau,x;u,d)$ vanishes quadratically at $(u,d)=(0,0)$. Similarly, $\Box_{G(\tau,x;u)}u = -G(\tau,x;u)^{\mu\nu}\tau\pa_\mu\,\tau\pa_\nu u + L_1^\flat u + N^\flat(\tau,x;u,{}^0\nabla u)$ where $L_1^\flat,N^\flat$ are of the same type as $L_1^\sharp,N^\sharp$. With $L_1=L_1^\sharp+L_1^\flat$, $N=N^\sharp+N^\flat$, we thus wish to solve
  \begin{equation}
  \label{EqTrQLEq}
    -G(\tau,x;u)^{\bar\mu\bar\nu}\tau\pa_\mu\,\tau\pa_\nu u + L_1 u + N(\tau,x;u,{}^0\nabla u) = -f \in \CI(\Omega^{\tau_0}),
  \end{equation}
  where $G(\tau,x;u)^{\bar\mu\bar\nu}$ are the coefficients of the inverse metric of $G(\tau,x;u)$ in the frame $\tau\pa_\tau$, $\tau\pa_x$.

  We first discuss uniqueness in a stronger form than stated: suppose $u,v\in\tau^N\,{}^0\cC^3(\Omega^{\tau_1})$ both solve $P(u)=0=P(v)$, and all $\tau$-level sets and the boundary hypersurfaces of $\Omega^{\tau_1}$ are spacelike for $G(\tau,x;u)$. From~\eqref{EqTrQLEq}, we then have
  \[
    0 = P(u)-P(v) = -G(\tau,x;u)^{\bar\mu\bar\nu}\tau\pa_\mu\,\tau\pa_\nu(u-v) + \tilde L_1(u-v)
  \]
  for some operator $\tilde L_1\in\Diff_0^1+\tau^\eta\,{}^0\cC^\infty\Diff_0^1$ which depends on $u,v$ only in the lower order (in the sense of decay of coefficients at $\tau=0$) terms. Proposition~\ref{PropTrUC} thus applies to this equation when $N$ is sufficiently large \emph{independently of $u,v$}, and gives $u-v=0$, as desired.

  Turning to the question of existence, let $\delta\in(0,\tau_0]$ and denote by $u_\delta$ the solution of the initial value problem
  \begin{equation}
  \label{EqTrQLdelta}
    P(u_\delta)=0,\qquad (u_\delta,\tau\pa_\tau u_\delta)|_{\tau=\delta} = (0,0),
  \end{equation}
  which exists and is smooth on $\Omega^{\tau(\delta)}_\delta$ for some maximal $\tau(\delta)>\delta$ by standard hyperbolic theory (see e.g.\ \cite[\S16]{TaylorPDE3}); note here that the spacelike nature of the lateral hypersurfaces of $\Omega^{\tau(\delta)}_\delta$ is stable under perturbations of the metric from $g=G(\tau,x;0)$ to $G(\tau,x;u_\delta)$ since $u_\delta$ is small (pointwise) near $\tau=\delta$. The main task is the prove the existence of and uniform bounds for $u_\delta$ on $\Omega_\delta^{\tau_1}$ for some $\delta$-independent $\tau_1\in(0,\tau_0]$. As in the proof of Proposition~\ref{PropTrUC}, we apply an energy estimate using $T^{G(\tau,x;u)}[u]$ with the vector field multiplier $V_\alpha=\tau^{-2\alpha}\tau\pa_\tau$. Analogously to~\eqref{EqTrUCEnergyEst} but on $\{\delta\leq\tau\leq\rho\}$ where $\rho\leq\tau(\delta)$, and now expressing $\Box_{G(\tau,x;u_\delta)}u_\delta$ in~\eqref{EqTrUCDiv} in terms of $f$, $L_1 u_\delta$, and $N(\tau,x;u_\delta,{}^0\nabla u_\delta)$, and using the vanishing of the initial data of $u_\delta$, this gives
  \[
    \rho^{-2\alpha}\int_{\Sigma_\rho} |{}^0\nabla u_\delta|^2\,\frac{\dd x}{\rho^n} + \alpha\iint_{\Omega_\delta^\rho} \tau^{-2\alpha}|{}^0\nabla u_\delta|^2\,\frac{\dd\tau}{\tau}\frac{\dd x}{\tau^n} \leq C \iint_{\Omega_\delta^\rho} \tau^{-2\alpha}\bigl( |u_\delta|^2 + |f|^2 \bigr)\,\frac{\dd\tau}{\tau}\frac{\dd x}{\tau^n}
  \]
  for all $\alpha>1$ which are sufficiently large compared to a constant $C=C(\|u_\delta\|_{{}^0\cC^1(\Omega_\delta^\rho)})$. We use here that whenever $\|u_\delta\|_{{}^0\cC^1(\Omega_\delta^\rho)}\leq 1$, say, then $|N(\tau,x;u_\delta,{}^0\nabla u_\delta)|\leq C'(|u_\delta|^2+|{}^0\nabla u_\delta|^2)$. Upon adding an estimate of the type~\eqref{EqTrUCPoincare}, we obtain, for all sufficiently large $\alpha$, and for all $\rho\in[\delta,\tau(\delta)]$,
  \begin{align*}
    &\rho^{-2\alpha}\int_{\Sigma_\rho} \bigl( |u_\delta|^2 + |{}^0\nabla u_\delta|^2 \bigr)\,\frac{\dd x}{\rho^n} + \iint_{\Omega_\delta^\rho} \tau^{-2\alpha}\bigl(|u_\delta|^2+|{}^0\nabla u_\delta|^2\bigr)\,\frac{\dd\tau}{\tau}\frac{\dd x}{\tau^n} \\
    &\hspace{6em} \leq C\iint_{\Omega_\delta^\rho} \tau^{-2\alpha}|f|^2\,\frac{\dd\tau}{\tau}\frac{\dd x}{\tau^n}.
  \end{align*}
  (Here we also estimate $u_\delta(\rho,x)$ by writing it as the integral of $\tau\pa_\tau u_\delta(\tau,x)$ over $\tau\in[\delta,\rho]$.)

  In order to obtain higher order estimates, let $M\in\N$ and consider the equations obtained from~\eqref{EqTrQLdelta} by differentiating along all $A_{j,\beta}=(\tau\pa_\tau)^j(\tau\pa_x)^\beta$, $j+|\beta|\leq M$. The initial conditions of $A_{j,\beta}u_\delta$ at $\tau=\delta$ can be expressed entirely in terms of $f$, and their $L^2$-norms are thus uniformly bounded by any power of $\delta$. Now, applying $A_{j,\beta}$ to~\eqref{EqTrQLEq} with $u_\delta$ in place of $u$ gives
  \begin{align*}
    &-G(\tau,x;u_\delta)^{\bar\mu\bar\nu}\tau\pa_\mu\,\tau\pa_\nu(A_{j,\beta} u_\delta) \\
    &\qquad = -A_{j,\beta} f - A_{j,\beta} L_1 u_\delta - [A_{j,\beta},G(\tau,x;u_\delta)^{\mu\nu}\tau\pa_\mu\,\tau\pa_\nu]u_\delta - A_{j,\beta} N(\tau,x;u_\delta,{}^0\nabla u_\delta).
  \end{align*}
  We apply the above energy estimate to this equation, with the required lower bound on $\alpha$ depending on $M$. We proceed to estimate the $\tau^\alpha L^2$-norm of the final term (the penultimate term being handled in the same fashion) by following the method of Klainerman \cite{KlainermanUniformDecay} (see also \cite[\S6.4]{HormanderNonlinearLectures}), but mixing the usage of spacetime and spatial $L^2$-based norms here. This term is a sum of terms of the following type: a product of a partial derivative of $N$ of order $\leq M$ with at most $M$ factors of derivatives of $u_\delta$ which involve a total of at most $M$ 0-derivatives of $u_\delta$ and ${}^0\nabla u_\delta$, so less than $(M+1)$ 0-derivatives of $u_\delta$ altogether; and each term is at least quadratic in $u_\delta$ and its derivatives. In estimating this product, we use the $\tau^\alpha L^2$-norm on the at most one factor in which $u_\delta$ is differentiated at least $\frac{M+1}{2}+1$ times, and the $L^\infty$-norm on all other factors which thus involve at most $\frac{M+1}{2}$ 0-derivatives of $u_\delta$; these $L^\infty$-norms are controlled, via standard Sobolev embedding on each level set $\tau=\rho$, by the square root of $\rho^{-2 K+n}\sum_{|\beta|\leq K}\int_{\Sigma_\rho} |(\rho\pa_x)^\beta u_\delta|^2\,\frac{\dd x}{\rho^n}$ for any choice of $K>\frac{n}{2}+\frac{M+1}{2}$. Fix $M>\frac{n+1}{2}+3$ so large that an integer $K$ exists with $\frac{n}{2}+\frac{M+1}{2}<K<M+1$, and increase $\alpha$ further, if necessary, so that $\alpha>2 K-n+\frac12$ (so $\rho^{-2 K+n}\rho^{-1}\leq\rho^{-2\alpha}$ for $\rho\leq 1$). Define the quantity
  \[
    U_\delta(\tau) = \sum_{j+|\beta|\leq M+1} \tau^{-2\alpha}\int_{\Sigma_\tau} |(\tau\pa_\tau)^j(\tau\pa_x)^\beta u_\delta|^2\,\frac{\dd x}{\tau^n},
  \]
  Note that by Sobolev embedding, $U_\delta(\tau)$ controls the ${}^0\cC^1$-norm of $u_\delta$ on $\Sigma_\tau$. We therefore arrive at the estimate
  \[
    U_\delta(\tau) \leq C_N\delta^N + \int_\delta^\tau C(f,M,\alpha,U_\delta(\sigma))\cdot U_\delta(\sigma)\,\dd\sigma,
  \]
  where the (continuous) function $C$ in the integrand is independent of $\delta$. Nonnegative solutions of this ordinary differential inequality have a uniform bound on $\tau\in[\delta,\tau_1]$ for some $\delta$-independent constant $\tau_1$; indeed, taking $N=1$, we have $U_\delta(\delta)\leq C_1\delta$, and if $\bar C<\infty$ denotes a constant so that $C(f,M,\alpha,U)\leq\bar C$ when $|U|\leq 2 C_1\delta$, then by Gr\"onwall we have $U_\delta(\tau)\leq C_1\delta e^{\bar C(\tau-\delta)}\leq 2 C_1\delta$ for $\tau\leq\bar C^{-1}\log 2$ independently of $\delta$. We conclude from this the existence of $u_\delta$ on $\Omega^{\tau_1}_\delta$. As a consequence, also
  \[
    \|u_\delta\|_{\tau^{\alpha-1/2}H_0^{M+1}(\Omega_\delta^{\tau_1})}^2=\int_\delta^{\tau_1} \tau U_\delta(\tau)\frac{\dd\tau}{\tau}
  \]
  is uniformly bounded.

  Taking a sequence of $\delta$ tending to $0$, a compactness and diagonal sequence argument produces a sequence $\delta_i\searrow 0$ so that for each $i_0$ we have convergence $u_{\delta_i}\to u$ in $\tau^{\alpha-1}H_0^M(\Omega_{\delta_{i_0}}^{\tau_1})$ with $i_0$-independent bounds. Since $M>\frac{n+1}{2}+3$, this implies local $\cC^3$-convergence in $\tau>0$, and thus the limit $u$ solves the desired PDE $P(u)=0$; by Sobolev embedding we moreover have $u\in\tau^{\alpha-1}\,{}^0\cC^3(\Omega^{\tau_1})$. Choosing $\alpha$ in these arguments so that $\alpha-1>N$, the uniqueness established at the beginning of this proof applies and guarantees the equality of all subsequential limits.

  We may then apply the same arguments for larger values of $\alpha$ and $M$ and conclude that $u\in\tau^\gamma\,{}^0\cC^k(\Omega^{\tau_{\gamma,k}})$ for all $\gamma>0,k\in\N$, where $0<\tau_{\gamma,k}\leq\tau_1$. Taking $\gamma=2 k+1$ with $k$ sufficiently large, this implies $u\in\tau^k\cC^k(\Omega^{\tau_k})$ for all $k$ where $\tau_k=\tau_{2 k+1,k}$. By standard continuation results for solutions of nonlinear hyperbolic equations, the existence of $u$ in $\tau^{\alpha-1}H_0^M(\Omega^{\tau_1})\subset H^M_\loc(\Omega^{\tau_1}\setminus\tau^{-1}(0))$ together with its $H^k_\loc$-regularity on $\Omega^{\tau_k}\setminus\tau^{-1}(0)$ implies the same $H^k$-regularity also on $\Omega^{\tau_k}\setminus\Omega^{\tau_1}$. Since $k$ is arbitrary, we conclude that $u\in\bigcap_k\tau^k\cC^k(\Omega^{\tau_1})=\CIdot(\Omega^{\tau_1})$. This completes the proof.
\end{proof}

\section{Proof of Theorem~\ref{ThmI}}
\label{SPf}

Given a Lorentzian metric $g_0$ in $\tau>0$, we define, following \cite{GrahamLeeConformalEinstein}, the gauge 1-form
\[
  \Ups(g) := g g_0^{-1}\delta_g\sfG_g g_0,
\]
where $(\delta_g h)_\mu=-h_{\mu\nu}{}^{;\nu}$ is the negative divergence and $\sfG_g h:=h-\frac12(\tr_g h)g$. Define the gauge-fixed Einstein vacuum operator
\[
  E(g) := 2\bigl(\Ric(g) - n g - \delta_g^*\Ups(g)\bigr).
\]
We have $E(g_0)\in\CIdot(M;S^2\,{}^0 T^*M)$ since $\Ups(g_0)=0$. Given any point $(0,x_0)\in\{0\}\in X$, choose coordinates $x\in\R^n$ near $x_0$ with $x=0$ at $x_0$.

\subsection{Structure of the gauge-fixed equation}
\label{SsPfStructure}

We first claim that the equation
\begin{equation}
\label{EqPfEq}
  P(g') := E(g_0+g') = 0
\end{equation}
fits, in the chart $[0,1)_\tau\times\R^n_x$, into the setting of Proposition~\ref{PropTrQL} with $g_0$ in place of $g$ and $G(\tau,x;g')=g_0|_{(\tau,x)}+g'$, $k=\frac{(n+1)(n+2)}{2}$. (We may trivialize $S^2\,{}^0 T^*M$ by means of $\frac{\dd\tau^2}{\tau^2}$, $\frac{\dd\tau}{\tau}\otimes_s\frac{\dd x^j}{\tau}$, $\frac{\dd x^i}{\tau}\otimes_s\frac{\dd x^j}{\tau}$.) To check this, we compute, in these coordinates (with $z=(\tau,x)=(z^0,z^1,\ldots,z^n)$ and $\pa_\nu=\pa_{z^\nu}$) and for $g=g_0+g'$,
\begin{align*}
  2\Ups(g)_\mu &= 2 g_{\mu\kappa}g^{\rho\lambda}\bigl(\Gamma(g)_{\rho\lambda}^\kappa - \Gamma(g_0)_{\rho\lambda}^\kappa\bigr) \\
    &= (g_0+g')^{\rho\lambda}\bigl(2\pa_\rho(g_0+g')_{\mu\lambda}-\pa_\mu(g_0+g')_{\rho\lambda}\bigr) - 2(g_0+g')_{\mu\kappa}(g_0+g')^{\rho\lambda}\Gamma(g_0)^\kappa_{\rho\lambda}.
\end{align*}
We apply $\delta_g^*$ to this, with $(\delta_g^*W)_{\mu\nu}=\frac12(\pa_\mu W_\nu+\pa_\nu W_\mu)-\Gamma(g_0+g')^\rho_{\mu\nu}W_\rho$. The terms in $-2\delta_g^*\Ups(g)$ involving $2$ derivatives of $g'$ are
\begin{equation}
\label{EqPfDelsUps}
  (g_0+g')^{\rho\lambda} \bigr({-}\pa_\nu\pa_\rho g'_{\mu\lambda} - \pa_\mu\pa_\rho g'_{\nu\lambda} + \pa_\mu\pa_\nu g'_{\rho\lambda}\bigr).
\end{equation}
The remaining terms can be put into the term $P_1$ in~\eqref{EqTrQLOp}. To see this, write $(g_0+g')^{\bar\rho\bar\lambda}=\tau^{-2}(g_0+g')^{\rho\lambda}$, resp.\ $(g_0+g')_{\bar\rho\bar\lambda}=\tau^2(g_0+g')_{\rho\lambda}$ for the components in the frame $\tau\pa_\tau$, $\tau\pa_x$, resp.\ $\frac{\dd\tau}{\tau}$, $\frac{\dd x}{\tau}$. Then the $\bar\mu\bar\nu$ component of $\delta_g^*\Ups(g)$ includes terms such as
\[
  \tau^2\pa_\nu(g_0+g')^{\rho\lambda}\cdot\pa_\rho(g_0+g')_{\mu\lambda} = \tau^2\,\tau\pa_\nu(\tau^{-2}(g_0+g')^{\bar\rho\bar\lambda}) \tau^{-2}\,\tau\pa_\rho\bigl(\tau^2(g_0+g')_{\bar\mu\bar\lambda}\bigr);
\]
but $\tau^k \tau\pa_\nu \tau^{-k}=\tau\pa_\nu+\tau^k[\tau\pa_\nu,\tau^{-k}]$ (here needed for $k=-2,2$), with the commutator vanishing for $\nu=1,\ldots,n$ while for $\nu=0$ (so $\pa_\nu=\pa_\tau$) it is $\tau^k[\tau\pa_\tau,\tau^{-k}]=-k$ (which is smooth). All other terms can be analyzed in the same manner, as can the lower order terms (involving at most $1$ derivative of $g'$) of $\Ric(g)_{\bar\mu\bar\nu}=\tau^2\Ric(g_0+g')_{\mu\nu}$ where
\begin{alignat*}{2}
  \Ric(g_0+g')_{\mu\nu} &= \tau^2 R(g_0+g')^\rho{}_{\mu\rho\nu} \\
    &= \tau^2\bigl( \pa_\rho\Gamma(g_0+g')_{\mu\nu}^\rho - \pa_\nu\Gamma(g_0+g')_{\mu\rho}^\rho &&+ \Gamma(g_0+g')_{\mu\nu}^\kappa\Gamma(g_0+g')_{\rho\kappa}^\rho \\
    &\quad &&- \Gamma(g_0+g')_{\mu\rho}^\kappa\Gamma(g_0+g')_{\nu\kappa}^\rho\bigr).
\end{alignat*}
The terms of $2\Ric(g)_{\mu\nu}$ involving two derivatives are
\[
  (g_0+g')^{\rho\lambda}\bigl(\pa_\rho\pa_\mu g'_{\lambda\nu} - \pa_\rho\pa_\lambda g'_{\mu\nu} - \pa_\nu\pa_\mu g'_{\rho\lambda} + \pa_\nu\pa_\lambda g'_{\mu\rho}\bigr).
\]
They cancel with~\eqref{EqPfDelsUps} except for $-(g_0+g')^{\rho\lambda}\pa_\rho\pa_\lambda g'_{\mu\nu}$ which is the principal part of $\Box_{g_0+g'}(g'_{\mu\nu})=\Box_{G(\tau,x;g')}(g'_{\mu\nu})$.

\subsection{Solving the Einstein vacuum equations}

Applying Proposition~\ref{PropTrQL} near every point $(0,x_0)\in\{0\}\times X\subset M$ to~\eqref{EqPfEq} (recalling that $P(0)=E(g_0)\in\CIdot(M;S^2\,{}^0 T^*M)$), we obtain a local solution $g'_{x_0}\in\CIdot(\Omega_{x_0};S^2\,{}^0 T^*M)$ on a domain $\Omega_{x_0}$ of the form~\eqref{EqTrDomain}. Since by Proposition~\ref{PropTrUC} any two local solutions $g'_{x_0}$ and $g'_{x_1}$ agree on the intersection $\Omega_{x_0}\cap\Omega_{x_1}$, these local solutions fit together to a solution $g'\in\CIdot(\cU;S^2\,{}^0 T^*M)$ of $P(g')=0$, where $\cU$ is an open neighborhood of $\{0\}\times X$ which can be taken to be the union of the interiors (in $M$) of all $\Omega_{x_0}$, $x_0\in X$; and $g=g_0+g'$ is a Lorentzian metric on $\cU$.

The second Bianchi identity for $g$, which reads $\delta_g\sfG_g\Ric(g)=0$, implies in view of $E(g)=0$ that $\Ups(g)=\Ups(g_0+g')\in\CIdot(\cU;{}^0 T^*M)$ satisfies the linear homogeneous wave equation
\[
  0 = \delta_g \sfG_g E(g) = -2\delta_g\sfG_g\delta_g^* \Ups(g).
\]
An application of Proposition~\ref{PropTrUC} gives $\Ups(g)=0$, and therefore $\Ric(g)-n g=0$.

The correction term $g'$ constructed by the above procedure is typically of the form $g'=g'(\tau,x;\dd\tau,\dd x)$. Upon pulling back $g$ by a local diffeomorphism $\Phi$ near $\{0\}\times X$ which is the identity to infinite order at $\tau=0$, one can arrange for $g'':=\Phi^*g-g_0$ to be of the form $g''=g''(\tau,x;\dd x)$; this follows from the same arguments as used in \cite[Lemma~5.2]{GrahamLeeConformalEinstein}. Relabeling $g''$ as $g'$ finishes the proof of the existence part of Theorem~\ref{ThmI}.

For the proof of uniqueness, we take $g_0$ to be equal to the true solution, i.e.\ $\Ric(g_0)-n g_0=0$. Suppose that also
\[
  g_1=g_0+h',\qquad h'=h'(\tau,x;\dd x)\in\CIdot\bigl([0,1)_\tau;\CI(X;S^2 T^*X)\bigr),
\]
satisfies $\Ric(g_1)-n g_1=0$. We shall show that $h'=0$ by working locally near any point $(0,x_0)\in\{0\}\times X\subset M$. First, there exists a smooth map $\Phi(\tau,x)=(\tau,x)+\Phi'(\tau,x)$ where $\Phi'\in\CIdot([0,1)\times\R^n;\R\times\R^n)$ so that $\Phi\colon(M,g_1)\to(M,g_0)$ is a wave map near $(0,x_0)$; similarly to the arguments in~\S\ref{SsPfStructure}, this can be seen to be a semilinear wave equation for $\Phi'$ which admits a local solution of the desired class. But then the identity map $(M,\Phi_*g_1)\to(M,g_0)$ is a wave map, which is equivalent to $\Ups(\Phi_*g_1)=0$. We conclude that $P(\Phi_*g_1-g_0)=0$, and thus $\Phi_*g_1=g_0$ by Proposition~\ref{PropTrUC}. For $x\in\R^n$ close to $x_0$, consider now the curve $\gamma\colon s\mapsto(e^s,x)$, defined for all sufficiently negative $s$. This is a geodesic for $g_1$ (with tangent vector $\tau\pa_\tau$), and thus $\Phi\circ\gamma$ is a geodesic for $g_0$ (with tangent vector $\Phi_*(\tau\pa_\tau)=\tau\pa_\tau+\cO(\tau^\infty)$). An analysis of the geodesic equation shows that the unique past-complete geodesic on $(M,g_0)$ whose tangent vector is $\tau\pa_\tau+\cO(\tau^\infty)$ is necessarily equal to $s\mapsto(a e^s,x')$, for some $a>0$, $x'\in\R^n$. Since $\Phi(0,x)=(0,x)$, we conclude that $\Phi(e^s,x)=(a(x)e^s,x)$; and since $\Phi$ equals the identity up to $\cO(\tau^\infty)$ errors, we must have $a=1$. Therefore $\Phi=\Id$, and hence $h'=g_1-g_0=g_1-\Phi_*g_1=0$, as claimed.

\appendix
\section{Construction of formal solutions}
\label{SF}

We prove here the existence and uniqueness of formal solutions of~\eqref{EqI} of the form~\eqref{EqIExp} with prescribed data $h=h_0(0)$, $k=\pa_\tau^n h_0(0)$. On $M=[0,1)\times X$, with $\dim X=n\geq 3$, we introduce $e^0=\frac{\dd\tau}{\tau}$ and, following \cite[\S2.3]{HintzGluedS}, introduce bundle splittings
\begin{equation}
\label{EqFSplit}
  {}^0 T^*M = \R e^0 \oplus \tau^{-1}T^*X,\qquad
  S^2\,{}^0 T^*M = \R(e^0)^2 \oplus (2 e^0\otimes_s \tau^{-1} T^*X) \oplus \tau^{-2}S^2 T^*X.
\end{equation}
We write $(\omega_N,\omega_T)^T$, $\omega_N\in\R e^0$, $\omega_T\in T^*X$, for the 1-form $\omega_N e^0+\tau^{-1}\omega_T$, similarly for symmetric 2-tensors. We consider a metric
\[
  g=\frac{-\dd\tau^2+h(x;\dd x)}{\tau^2},
\]
which in this splitting is thus given by $(-1,0,h)^T$. We write $\dd_X$ for the exterior derivative on $X$, further $\delta_g$ and $\delta_g^*$ for the (negative) divergence and symmetric gradient, and $\sfG_g=I-\frac12 g\tr_g$. In local coordinates $x=(x^1,\ldots,x^n)\in\R^n$ on $X$, we set $e_0=\tau\pa_\tau$, $e_i=\tau\pa_{x^i}$, $e^i=\frac{\dd x^i}{\tau}$, and compute $\nabla_{e_0}e^\mu=0$, $\nabla_{e_i}e^0=h_{i k}e^k$, and $\nabla_{e_i}e^k=\delta_i^k e^0-\tau\Gamma(h)_{i j}^k e^j$. In the splittings~\eqref{EqFSplit}, this gives
\begin{alignat*}{2}
  \delta_g^*
  &=\begin{pmatrix}
     e_0 & 0 \\
     \frac12\tau\dd_X & \frac12(1+e_0) \\
     h & \tau\delta_h^*
   \end{pmatrix},&\qquad
   \delta_g
   &=\begin{pmatrix}
      e_0-n & \tau\delta_h & -\tr_h \\
      0 & e_0-n-1 & \tau\delta_h
    \end{pmatrix}, \\
  \sfG_g
  &=\begin{pmatrix}
      \frac12 & 0 & \frac12\tr_h \\
      0 & I & 0 \\
      \frac12 h & 0 & \sfG_h
    \end{pmatrix}, &\qquad
  \Box_g
  &=e_0^2 - n e_0 + \tau^2\Delta_h
  +\begin{pmatrix}
     -2 n & 4\tau\delta_h & -2\tr_h \\
     -2\tau\dd_X & -n-3 & 2\tau\delta_h \\
     -2 h & -4\tau\delta_h^* & -2
   \end{pmatrix}.
\end{alignat*}
Define the endomorphism $(\sR_g(u))_{\kappa\mu}=R^\nu{}_{\kappa\mu}{}^\rho u_{\nu\rho}+\frac12(\Ric_\kappa{}^\nu u_{\nu\mu}+\Ric_\mu{}^\nu u_{\kappa\nu})$, where $R$ and $\Ric$ are the Riemann and Ricci curvature tensors of $g$. The coefficients $R_\mu{}^\nu{}_{\kappa\lambda}=e_\mu([\nabla_{e_\kappa},\nabla_{e_\lambda}]e^\nu-\nabla_{[e_\kappa,e_\lambda]}e^\nu)$ vanish except for $R_0{}^j{}_{0 k}=-R_0{}^j{}_{k 0}=-\delta_k^j$, $R_i{}^0{}_{0 k}=-R_i{}^0{}_{k 0}=-h_{i k}$, $R_i{}^j{}_{m k}=\delta_k^j h_{i m}-\delta_m^j h_{i k}+\tau^2 R(h)_i{}^j{}_{m k}$, so
\begin{align*}
  \Ric(g)&=n g + (0,0,\tau^2\Ric(h))^T, \\
  \sR_g&=
    \begin{pmatrix}
      n & 0 & \tr_h \\
      0 & (n+1) & 0 \\
      h & 0 & (n+1)-h\tr_h
    \end{pmatrix} + \tau^2\begin{pmatrix} 0 & 0 & 0 \\ 0 & \frac12\Ric(h) & 0 \\ 0 & 0 & \sR_h \end{pmatrix}.
\end{align*}
In particular, $\Ric(g)-n g\in\tau^2\CI(M;S^2\,{}^0 T^*M)$. In order to solve away this error term by adding lower order corrections to $g$, we first compute the indicial operator of the linearization $D_g\Ric-n$: this is the bundle endomorphism obtained by acting on tensors of the form $\tau^\lambda v(x)$ where $v$ is smooth, and extracting the $\tau^\lambda$ coefficient of the result. Recalling $2 D_g\Ric=\Box_g-2\delta_g^*\delta_g\sfG_g+2\sR_g$ from \cite{DeTurckPrescribedRicci,GrahamLeeConformalEinstein}, and passing to the refined splitting
\begin{equation}
\label{EqFSplitRefined}
  S^2\,{}^0 T^*M = \R(e^0)^2 \oplus (2 e^0\otimes_s \tau^{-1}T^*X) \oplus \R\tau^{-2}h \oplus \tau^{-2}\ker\tr_h,
\end{equation}
one finds
\[
  I(D_g\Ric-n,\lambda) = \frac12\begin{pmatrix} n(\lambda-2) & 0 & -n\lambda(\lambda-2) & 0 \\ 0 & 0 & 0 & 0 \\ -(\lambda-2 n) & 0 & \lambda(\lambda-2 n) & 0 \\ 0 & 0 & 0 & \lambda(\lambda-n) \end{pmatrix}.
\]
We only consider $\lambda\geq 2$. The operator $I(D_g\Ric-n,\lambda)$ annihilates the range of
\begin{equation}
\label{EqFdelstar}
  I(\delta_g^*,\lambda) = \frac12\begin{pmatrix} 2\lambda & 0 \\ 0 & \lambda+1 \\ 2 & 0 \\ 0 & 0 \end{pmatrix}
\end{equation}
which is $2$-block-dimensional;\footnote{We regard each of the summands in~\eqref{EqFSplitRefined} to have `block-dimension' $1$; thus, the block-dimension of the range of $I(\delta_g^*,\lambda)$ is equal to the rank of the right hand side of~\eqref{EqFdelstar} regarded as a real $4\times 2$-matrix.} a subbundle $V$ complementary to the range of $I(\delta_g^*,\lambda)$ is the bundle $\tau^{-2}S^2 T^*X$ of tensors $(0,0,a,b)^T$ (in the splitting~\eqref{EqFSplitRefined}). Acting on
\[
  V:=\tau^{-2}S^2 T^*X\subset S^2\,{}^0 T^*_X M,
\]
$I(D_g\Ric-n,\lambda)$ is injective, and thus has 2-block-dimensional range, unless $\lambda=n$. On the other hand, by the second Bianchi identity (or by direct computation), the range of $I(D_g\Ric-n,\lambda)$ lies in the nullspace of
\begin{equation}
\label{EqFBianchi}
  I(\delta_g\sfG_g,\lambda) = \frac12\begin{pmatrix} \lambda-2 n & 0 & n\lambda(\lambda-2) & 0 \\ 0 & 2(\lambda-n-1) & 0 & 0 \end{pmatrix}.
\end{equation}
For $\lambda\neq n+1$, this nullspace is 2-block-dimensional. For $\lambda\neq n,n+1$, we therefore have $\ran I(D_g\Ric-n,\lambda)=\ker I(\delta_g\sfG_g,\lambda)$.

Now, $\Ric(g)-n g=\tau^2 E_2$, and by the second Bianchi identity (or by inspection, since $E_2=(0,0,\Ric(h))^T$), we have $E_2\in\ker I(\delta_g\sfG_g,2)=\ran I(D_g\Ric-n,2)|_V$. We can therefore find a unique $h_2\in\CI(X;V)\subset\CI(X;S^2\,{}^0 T^*_X M)$ so that $\Ric(g)-n g\equiv-(D_g\Ric-n)(\tau^2 h_2)\bmod\tau^3\CI$. This implies that for $g_{(2)}=g+\tau^2 h_2$ we have
\begin{equation}
\label{EqFErr2}
  \Ric(g_{(2)})-n g_{(2)}\in\tau^3\CI(M;S^2\,{}^0 T^*M).
\end{equation}
In fact, more is true: in terms of $\mu:=\tau^2$, we have $g_{(2)}=-\frac{\dd\mu^2}{\mu^2}+\mu^{-1}h_{(2)}(\mu,x;\dd x)$ where $h_{(2)}(\mu)=h+\mu h_2$, and thus $\Ric(g_{(2)})-n g_{(2)}$ is a linear combination of sections of $\R\frac{\dd\mu^2}{\mu^2}=\R(e^0)^2$, $2\frac{\dd\mu}{\mu}\otimes_s T^*X=\tau\cdot 2 e^0\otimes_s \tau^{-1}T^*X$, $S^2 T^*X=\tau^2\,\tau^{-2}S^2 T^*X$, with coefficients that are smooth in $\mu$ and thus even in $\tau$. Therefore,~\eqref{EqFErr2} is in fact even, and thus lies in $\tau^4\CI$, except for sections of $\tau^3\frac{\dd\tau}{\tau}\otimes_s \tau^{-1}T^*X$ with even coefficients in $\tau$. If $\tau^3(0,E_3,0,0)^T$ denotes the leading order part of the latter term, then the second Bianchi identity gives $(0,E_3,0,0)^T\in\ker I(\delta_g\sfG_g,3)$, which due to $n+1\neq 3$ forces $E_3=0$. Therefore,
\[
  \Ric(g_{(2)}) - n g_{(2)} \in\tau^4\CI(M;S^2\,{}^0 T^*M).
\]
Denote by $\tau^4 E_4$ the leading order term of~\eqref{EqFErr2}; by the above evenness considerations this has vanishing NT-component, by which we mean the $\frac{\dd\tau}{\tau}\otimes_s\tau^{-1}T^*X$ component. The second Bianchi identity gives $E_4\in\ker I(\delta_g\sfG_g,4)$. When $n\neq 3,4$, we can thus write $E_4=-I(D_g\Ric-n,4)h_4$ for some $h_4\in\CI(X;V)$, and so on. In this manner, we find unique $h_2,h_4,\ldots,h_{2 j}\in\CI(X;V)$ so that for $g_{(j)}=g+\tau^2 h_2+\ldots+\tau^{2 j}h_{2 j}$ we have $\Ric(g_{(j)})-n g_{(j)}\in\tau^{2 j+2}\CI(M;S^2\,{}^0 T^*M)$; for $n$ odd, we get up to $2 j=n-1$, and for $n$ even up to $2 j=n-2$.

Consider first the case that $n=2 j+1$ is odd. Then $\Ric(g_{(j)})-n g_{(j)}\equiv\tau^{n+1}E_{n+1}\bmod\tau^{n+2}\CI$, with $E_{n+1}$ lying in the 2-block-dimensional space obtained as the intersection of $\ker I(\delta_g\sfG_g,n+1)$ with the space of tensors with vanishing NT-component; since the space $\ran I(D_g\Ric-n,n+1)|_V$ is also 2-block-dimensional, one can correct $g_{(j)}$ to $g_{(j+2)}=g_{(j)}+\tau^{n+1} h_{(n+1)}$ for a unique $h_{(n+1)}\in\CI(X;V)$ so that $\Ric(g_{(j+2)})-n g_{(j+2)}\in\tau^{n+2}\CI$. Note moreover that $\tau^{-2}\ker\tr_h\subset\ker I(D_g\Ric-n,n)$; and one computes
\begin{equation}
\label{EqFDiv}
  \bigl(\tau^{-n-1}(D_g\Ric-n)(\tau^n(0,0,0,k)^T)\bigr)|_{\tau=0} = (0,-n\delta_h k,0,0)^T,\qquad k\in\CI(X;\ker\tr_h).
\end{equation}
Thus, if $k$ is transverse traceless (so $\tr_h k=0=\delta_h k$), then also $g_{(j+2)}+\tau^n(0,0,0,k)^T$ satisfies the Einstein vacuum equations up to a $\tau^{n+2}\CI$ error. From this point onward one can iteratively solve away the error term, power by power (i.e.\ not anymore in steps of two powers of $\tau$), to infinite order, which establishes~\eqref{EqIExp} for $n$ odd.

In the case that $n=2 j+2$ is even, we have $\Err_j:=\Ric(g_{(j)})-n g_{(j)}\equiv\tau^n E_n\bmod\tau^{n+1}\CI$, with $E_n$ lying in the 2-block-dimensional space $\ker I(\delta_g\sfG_g,n)$; but $I(D_g\Ric-n,n)|_V$ has 1-block-dimensional range whose projection onto the summand $\tau^{-2}\ker\tr_h$ in~\eqref{EqFSplitRefined} is trivial. Since $\pa_\lambda I(D_g\Ric-n,n)$ maps $(0,0,0,h_{1,0})^T\in\ker I(D_g\Ric-n,n)$ to $(0,0,0,n h_{1,0})^T$, we can find a unique $h_{1,0}$ so that the addition of $\tau^n(\log\tau)(0,0,0,h_{1,0})^T$ to $g_{(j)}$ corrects the error $\Err_j$ by the term $\tau^n(0,0,0,n h_{1,0})\bmod\tau^{n+1}(\log\tau)\CI+\tau^{n+1}\CI+o(\tau^{n+1})$ to a new error whose $\tau^n$-coefficient lies in $\ran I(D_g\Ric-n,n)|_V$. Thus, we can find $h_{(2 j+2)}\in\CI(X;V)$ so that for $g_{(j+1)}=g_{(j)}+\tau^n(\log\tau)(0,0,0,h_{1,0})^T+\tau^n h_{(2 j+2)}$ we have
\begin{equation}
\label{EqFErrj1}
  \Ric(g_{(j+1)})-n g_{(j+1)}\in\tau^{n+1}(\log\tau)\CI+\tau^{n+1}\CI+o(\tau^{n+1}).
\end{equation}
The NT-component of the leading order error $\tau^{n+1}((\log\tau)E_1+E_0)$ lies in the kernel of $\tau\pa_\tau-n-1$ by~\eqref{EqFBianchi}, which forces $E_1=0$. By~\eqref{EqFDiv}, adding a further term $\tau^n(0,0,0,k)^T$ to $g_{(j+1)}$ gives an additional $\tau^{n+1}(0,-n\delta_h k,0,0)^T$ contribution to~\eqref{EqFErrj1}; if one can choose $k$ so that this contribution cancels also the NT-component of $E_0$---this is the divergence equation mentioned in the introduction---, one can then proceed with the formal power series construction (with one power of $\log\tau$ until the order $\tau^{2 n}(\log\tau)^2$ due to the nonlinear nature of the Einstein vacuum equations, and so on), thus establishing~\eqref{EqIExp} for $n$ even.

\bibliographystyle{alphaurl}
\bibliography{
/scratch/users/hintzp/ownCloud/research/bib/math,
/scratch/users/hintzp/ownCloud/research/bib/mathcheck,
/scratch/users/hintzp/ownCloud/research/bib/phys
}

\end{document}